\documentclass[a4paper,11pt]{amsart}
\usepackage[utf8]{inputenc}
\usepackage{amsaddr}
\usepackage{calrsfs}
\usepackage{graphicx,color}
\usepackage[english]{babel}
\usepackage{graphicx}
\usepackage{amsmath}
\usepackage{epstopdf}
\usepackage{float}
\usepackage{color}

\usepackage{amssymb}


\setlength{\parskip}{.7ex plus .2ex minus .2ex}
\setlength{\emergencystretch}{1em}
\setlength{\mathsurround}{1pt}

\addtolength{\textwidth}{3em}
\addtolength{\hoffset}{-1.5em}
\addtolength{\textheight}{4ex}
\addtolength{\voffset}{-2ex}


\numberwithin{equation}{section}


\theoremstyle{definition}

\theoremstyle{plain}
\newtheorem{thm}{Theorem}[section]

\newtheorem{cor}{Corollary}[section]

\theoremstyle{definition}
\newtheorem{rem}{Remark}[section]


\newcommand{\R}{\mathbb{R}}
\newcommand{\E}{\mathbb{E}}
\renewcommand{\P}{\mathbb{P}}

\newcommand{\F}{\mathcal{F}}


\begin{document}
\title[geometric Asian power options]
{The evaluation of geometric Asian power options under time changed mixed fractional Brownian motion }

\date{\today}

\author[Shokrollahi]{Foad Shokrollahi}
\address{Department of Mathematics and Statistics, University of Vaasa, P.O. Box 700, FIN-65101 Vaasa, FINLAND}
\email{foad.shokrollahi@uva.fi}

\begin{abstract}
The aim of this paper is to evaluate geometric Asian option by a mixed fractional subdiffusive Black-Scholes model. We derive a pricing formula for geometric Asian option when the underlying stock follows a time changed mixed fractional Brownian motion. We then apply the results to price Asian power options on the stocks that pay constant dividends when the payoff is a power function. Finally, lower bound of Asian options and some special cases are provided.
\end{abstract}

\keywords{Mixed fractional Brownian motion;
Geometric Asian option;
Power option;
Time changed process;
}



\subjclass[2010]{91G20; 91G80; 60G22}

\maketitle

\section{Introduction}\label{sec:1}
A standard option (also called plain vanilla) is a financial contract which gives the owner of the contract the right, but not the obligation, to buy or sell a specified asset to a prespecified price (strike price) at a prespecified time (maturity). The specified asset (underlying asset) can be for example stocks, indexes, currencies, bonds or commodities. The option can be either a call option, which gives the owner the right to buy the underlying asset, or it can be a put option, which gives the owner the right to sell the underlying asset. Moreover, the option can either only be exercised at maturity, European option, or it can be exercised at any time before maturity, American option. Path dependent options are options whose payoffs are affected by how the price of the underlying stock at maturity was reached, and the price path of the underlying stock. One particular path dependent option, called Asian option, will be of main focus throughout this research. The average price of the underlying asset can either determine the underlying settlement price (average price Asian options) or the option strike price (average strike Asian options). Furthermore, the average prices can be calculated using either the arithmetic mean or the geometric mean. The type of Asian option that will be examined throughout this research is geometric Asian option.

Over the past three decades, academic researchers and market practitioners have
developed and adopted different models and techniques for option valuation. The
path-breaking work on option pricing was undertaken by Black and Scholes $(BS)$ \cite{black1973pricing} in 1973. In the $BS$ model has been assumed
that the asset price dynamics are governed by a geometric Brownian motion. However, in the last few years based on some empirical studies, it has been shown that the geometric Brownian motion model cannot capture many
of the characteristic features of prices, such as: heavy tailed, long-range correlations, lack of scale invariance, periods of
constant values, and etc. Fractional Brownian motion has been suggested to display the long-range dependence and fluctuation observed in the empirical
data \cite{wang2012pricing,zhang2009equity,necula2002option}. Since fractional Brownian motion is neither a Markov process nor a semi-martingale, then we cannot
use the usual stochastic calculus to analyze it. Further, fractional Brownian motion admits arbitrage in a
complete and frictionless market. To get around this problem and to take into account the long memory property, it has
been proposed that it is reasonable to use the mixed  fractional Brownian motion $(mfBm)$ to capture the
fluctuations of financial asset \cite{el2003fractional,mishura2008stochastic,cheridito2001mixed}.

The $mfBm$ is a linear combination of the Brownian motion and fractional Brownian motion with Hurst index $H\in(\frac{1}{2}, 1)$, defined on the filtered probability $(\Omega, \F, \P)$ for any $t\in \R^+$ by:

\begin{eqnarray}
M_t^H(a, b)=aB(t)+bB^H(t),
\label{eq:1}
\end{eqnarray}

where $B(t)$ is a Brownian motion, and  $B^H(t)$ is a independent fractional Brownian motion with Hurst index $H$. Cheridito \cite{cheridito2001mixed} proved that, for $H\in(\frac{3}{4}, 1)$, the mixed model is equivalent to the Brownian motion and hence it is also arbitrage free. For $H\in(\frac{1}{2}, 1)$, Mishura and Valkeila \cite{mishura2002absence} demonstrated that the mixed model is arbitrage free. Rao \cite{rao2016pricing} discussed geometric Asian power option under $mfBm$. To see more about the mixed model, one can refer to Refs \cite{mishura2008stochastic,cheridito2001mixed,shokrollahi2016pricing}.

In order to describe properly financial data exhibiting periods of constant values, Magdziarz  \cite{magdziarz2012anomalous} introduced subdiffusive
geometric Brownian motion

 \begin{eqnarray}
X_\alpha(t)=X(T_\alpha(t)),
\label{eq:2}
\end{eqnarray}
where $X(t)$ is a geometric Brownian motion, $T_\alpha(t)$ is the inverse $\alpha$-stable subordinator with parameter $\alpha\in (0, 1)$. Magdziarz
pointed out that this model is arbitrage free but incomplete, and based on the subdiffusive
geometric Brownian motion obtained the corresponding subdiffusive $BS$ formula for the fair price of European options. Within the framework of subdiffusive theory, numerous scholars continue to investigate financial problems identified considered  in Magdziarz’s pioneer work of subdiffusion finance in 2009. These include the pricing formulas of European option and European currency option under subdiffusive fractional $BS$ and subdiffusive mixed fractional $BS$ models \cite{guo2014pricing,shokrollahi2016pricing,gu2012time}.

In this research, inspired by the works \cite{guo2014pricing} and \cite{shokrollahi2016pricing}, we introduce a pricing formula for geometric Asian options under time changed mixed fractional $BS$ model.
 We then apply the result to price geometric Asian power options that pay constant dividends when the payoff is a power function. We also provide some  special cases and lower bound for the Asian option price.
The rest of the paper is organized as follows. In Section \ref{sec:2}, some useful concepts and theorems of time changed mixed fractional process are introduced. In Section \ref{sec:3}, a brief introduction of Asian options is given. Analytical valuation formula for geometric Asian options is derived in Section \ref{sec:4}  and then applied to geometric Asian power options in Section \ref{sec:5}. The lower bound on the price of the Asian option is proposed in Section \ref{sec:6}.

\section{Auxiliary facts}\label{sec:2}
In this section, we recall some definitions and results about mixed fractional time changed process. More information about mixed fractional process can be found in \cite{guo2014pricing,shokrollahi2016pricing}.

The time-changed process $T_\alpha(t)$ is the inverse $\alpha$-stable subordinator defined as below
\begin{eqnarray*}
T_\alpha(t)=\inf\{\tau>0, U_\alpha(t)\geq t\}.
\label{eq:5}
\end{eqnarray*}

here $U_\alpha(\tau)_{\tau\geq 0}$ is a strictly increasing $\alpha$-stable Lévy process \cite{sato1999levy} with Laplace transform: $\E(e^{-uU_\alpha(\tau)})=e^{-\tau u^\alpha}$, $\alpha\in(0, 1)$.

 $U_\alpha(t)$ is $\frac{1}{\alpha}$ self-similar and $T_\alpha(t)$ is $\alpha$ self-similar, that is, for every $h>0$, $U_\alpha(ht)\triangleq h^{\frac{1}{\alpha}}U_\alpha(t)$  $T_\alpha(ht)\triangleq h^{\alpha}T_\alpha(t)$, here $\triangleq$ indicates that the random variables on both sides have the same distribution. Specially, when $\alpha\uparrow 1$, $T_\alpha(t)$ reduces to the physical time $t$. You can find more details about subordinator and its inverse processes in \cite{janicki1993simulation,piryatinska2005models}.

Consider the subdiffusion process
\begin{eqnarray*}
M_{\alpha}^ H(t)(a,b)=aW_{\alpha}(t)+bW_{\alpha}^ H(t)=aB(T_\alpha(t))+bB^H(T_\alpha(t)),
\label{eq:6}
\end{eqnarray*}
where $B(\tau)$ is a Brownian motion, $B^H(\tau)$ is a fractional Brownian motion with Hurst index $H$ and $T_\alpha(t)$ is inverse $\alpha$-subordinator which are supposed to be independent. When $a=0, b=1$, the results represented in \cite{gu2012time} and if $b=0, a=1$, then it is the process considered in \cite{magdziarz2009black}. In this research, we assume that $H\in(\frac{3}{4}, 1)$ and $(a, b)=(1, 1)$.

\begin{rem}
When $\alpha\uparrow 1$, the processes $W_{\alpha}(t)$ and $W_{\alpha}^ H(t)$ degenerate to $B(t)$ and $B^H(t)$, respectively. Then, $M_{\alpha}^ H(t)(a,b)$ reduces to the $mfBm$ in Eq. (\ref{eq:1}).
\end{rem}

\begin{rem}
From \cite{gu2012time,magdziarz2009black}, we know that $\E(T_\alpha(t))=\frac{t^\alpha}{\Gamma(\alpha+1)}$. Then, by applying $\alpha$-self-similar and non-decreasing
sample path of $T_\alpha(t)$, we have
\begin{eqnarray}
\E[(B(T_\alpha(t)))^2]&=&\frac{t^\alpha}{\Gamma(\alpha+1)}\\
\E[(B^H(T_\alpha(t)))^2]&=&\left(\frac{t^\alpha}{\Gamma(\alpha+1)}\right)^{2H}.
\label{eq:7}
\end{eqnarray}

\end{rem}

\section{Asian options}\label{sec:3}
The payoff of an Asian option is based on the difference between an asset's average price
over a given time period, and a fixed price called the strike price. Asian options are
popular because they tend to have lower volatility than options whose payoffs are based
purely on a single price point. It is also harder for big traders to manipulate an average
price over an extended period than a single price, so Asian option offers further
protection against risk. The Asian call and put options have a payoff that is calculated with an average value of
the underlying asset over a specific period. The payoff for an Asian call and put option with strike price $K$ and expiration time $T$ is $(\bar{S}(T)-K)_+$ and $(K-\bar{S}(T))_+$ respectively, where $\bar{S}(T)$ is the average price of the underlying asset over the prespecified interval.
Since Asian options are less expensive than their European counterparts, they are
attractive to many different investors.
Apart from the regular Asian option there also exists Asian strike option. An Asian
strike call option guarantees the holder that the average price of an underlying asset is
not higher than the final price. The option will not be exercised if the average price of
the underlying asset is greater than the final price. The holder of an Asian strike put
option makes sure that the average price received for the underlying asset is not less
than what the final price will provide. The payoff for
an Asian strike call and put option is $(\bar{S}(T)-S(T))_+$ and $(S(T)-\bar{S}(T))_+$ respectively, where $S(T)$ is the value of underlying stock at maturity date $T$.

Asian options are divided into two different types, when calculating the average, the
geometric Asian option
\begin{eqnarray*}
G(T)=\exp\left\{\frac{1}{T}\int_0^T\ln S(t)dt\right\},
\label{eq:3}
\end{eqnarray*}

and the arithmetic Asian option.

\begin{eqnarray*}
A(T)=\frac{1}{T}\int_0^T S(t)dt.
\label{eq:4}
\end{eqnarray*}
We assume that the prespecified interval $[0, T]$ is fixed, then will price the geometric Asian option in the continuous average case under time changed mixed fractional Brownian motion environment.

\section{Pricing model of geometric Asian option}\label{sec:4}

In order to derive an Asian option pricing formula in a time changed mixed fractional market, we
make the following assumptions:
\begin{enumerate}
\item[(i)] the price of underlying stock at time $t$ is given by

\begin{eqnarray}
S_t&&=S_0\exp\Big\{(r-q)T_\alpha(t)+\sigma W_{\alpha}(t)+\sigma W_{\alpha}^ H(t)\nonumber\\
&&-\frac{1}{2}\sigma^2\frac{t^\alpha}{\Gamma(\alpha+1)}-\frac{1}{2}\sigma^2\left(\frac{t^\alpha}{\Gamma(\alpha+1)}\right)^{2H}\Big\},\quad 0<t<T,
\label{eq:8}
\end{eqnarray}
where $H\in(\frac{3}{4}, 1)$, $\alpha\in (\frac{1}{2}, 1)$ and $\alpha H>1$.
\item[(ii)] there are no transaction costs in buying or selling the stocks or option.
\item[(iii)] the risk free interest rate $r$  and dividend rate $q$  are known and constant through time.
\item[(iv)] the option can be exercised only at the maturity time.
\end{enumerate}

From Eq. (\ref{eq:8}), we know that $\ln S_t\simeq N(u, v)$, where

\begin{eqnarray}
u&=&\ln S(0)+(r-q)T_\alpha(t)-\frac{1}{2}\sigma^2\frac{t^\alpha}{\Gamma(\alpha+1)}-\frac{1}{2}\sigma^2\left(\frac{t^\alpha}{\Gamma(\alpha+1)}\right)^{2H}\\
v&=&\sigma^2\frac{t^\alpha}{\Gamma(\alpha+1)}+\sigma^2\left(\frac{t^\alpha}{\Gamma(\alpha+1)}\right)^{2H}.
\label{eq:9}
\end{eqnarray}

Let $C(S(0), T)$ be the price of a European
call option at time $0$ with strike price $K$ and that matures at time $T$. Then,  from \cite{guo2014pricing}, we can get

\begin{eqnarray*}
C(S(0), T)=S(0)e^{-qT}\phi(d_1)-Ke^{-rT}\phi(d_2),
\label{eq:10}
\end{eqnarray*}

where
\begin{eqnarray*}
d_1&=&\frac{\ln\frac{S_0}{K}+(r-q+\frac{\hat{\sigma}^2}{2})T}{\hat{\sigma}\sqrt{T}},\quad d_2=d_1-\hat{\sigma}\sqrt{T},\\
\hat{\sigma}^2&=&\sigma^2\frac{T^{\alpha-1}}{\Gamma(\alpha)}+\sigma^2\left(\frac{T^{\alpha-1}}{\Gamma(\alpha)}\right)^{2H},
\label{eq:11}
\end{eqnarray*}
and $\phi(.)$ denotes cumulative normal density function.

Under the above assumptions (i)-(iv), we obtain the value of the geometric Asian call option by the following theorem

\begin{thm}

Suppose the stock price $S_t$ satisfied Eq. (\ref{eq:8}). Then, under the risk-neutral probability measure, the value of geometric Asian call option $C(S(0), T)$ with strike price $K$ and maturity time $T$ is given by

\begin{eqnarray}
C(S(0), T)&=&S(0)\exp\Bigg\{-rT+(r-q)\frac{T^{\alpha}}{\Gamma(\alpha+2)}+\frac{\sigma^2(-T)^{\alpha}}{2\Gamma(\alpha+3)}\nonumber\\
&&-\frac{\sigma^2T^{2\alpha H}}{4(2\alpha H+1)(\alpha H+1)(\Gamma(\alpha+1))^{2H}}\Bigg\}\phi(d_1)-Ke^{-qT}\phi(d_2),
\label{eq:12}
\end{eqnarray}
where
\begin{eqnarray*}
d_2&=&\frac{\mu_G-\ln K}{\sigma_G},\quad d_1=d_2+\sigma_G,\\
\mu_G&=&\ln S(0)+(r-q-\frac{\sigma^2}{2})\frac{T^{\alpha}}{\Gamma(\alpha+2)}-\frac{\sigma^2T^{2\alpha H}}{2(2\alpha H+1)(\Gamma(\alpha+1))^{2H}},\\
\sigma_G^2&=&\frac{\sigma^2T^{\alpha}}{\Gamma(\alpha+2)}+\frac{\sigma^2(-T)^{\alpha}}{\Gamma(\alpha+3)}+\frac{\sigma^2T^{2\alpha H}}{(2\alpha H+2)(\Gamma(\alpha+1))^{2H}},
\label{eq:13}
\end{eqnarray*}
the interest rate $r$ and the dividend rate $q$ are constant over time and $\phi(.)$ denotes cumulative normal density function.
\label{thm:1}
\end{thm}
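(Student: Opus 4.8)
The plan is to reduce the geometric Asian call to an ordinary Black--Scholes--type computation for a single lognormal random variable. Write the price as the discounted expected payoff
\begin{eqnarray*}
C(S(0),T)=e^{-rT}\,\E\big[(G(T)-K)_+\big],\qquad G(T)=\exp\Big\{\tfrac1T\int_0^T\ln S_t\,\d t\Big\},
\end{eqnarray*}
substitute the explicit representation (\ref{eq:8}) of $S_t$, and integrate termwise in $t$ to obtain
\begin{eqnarray*}
\ln G(T)&=&\ln S(0)+\frac{r-q}{T}\int_0^T T_\alpha(t)\,\d t+\frac{\sigma}{T}\int_0^T W_\alpha(t)\,\d t+\frac{\sigma}{T}\int_0^T W_\alpha^H(t)\,\d t\\
&&-\frac{\sigma^2}{2T}\int_0^T\frac{t^\alpha}{\Gamma(\alpha+1)}\,\d t-\frac{\sigma^2}{2T}\int_0^T\Big(\frac{t^\alpha}{\Gamma(\alpha+1)}\Big)^{2H}\,\d t.
\end{eqnarray*}
Conditioning on the path of the subordinator $T_\alpha$ turns $W_\alpha=B(T_\alpha(\cdot))$ and $W_\alpha^H=B^H(T_\alpha(\cdot))$ into genuine (time--changed) Gaussian processes, so $\ln G(T)$ is Gaussian; the pricing then amounts to identifying its mean $\mu_G$ and variance $\sigma_G^2$, using $\E[T_\alpha(t)]=t^\alpha/\Gamma(\alpha+1)$ and the second--moment relations recalled in Section \ref{sec:2}.

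For $\mu_G$ the two deterministic integrals are elementary and $\E\big[\int_0^T T_\alpha(t)\,\d t\big]=\int_0^T t^\alpha\,\d t/\Gamma(\alpha+1)=T^{\alpha+1}/\Gamma(\alpha+2)$, so after collecting the drift term and the two Itô--type corrections, and using $\Gamma(\alpha+2)=(\alpha+1)\Gamma(\alpha+1)$ and $\int_0^T t^{2\alpha H}\,\d t=T^{2\alpha H+1}/(2\alpha H+1)$, one lands on the stated $\mu_G$. For $\sigma_G^2$ I would note that, conditionally on $T_\alpha$, the drift and the corrections are deterministic, so the variance comes only from $\frac{\sigma}{T}\int_0^T(W_\alpha(t)+W_\alpha^H(t))\,\d t$; since $B$ and $B^H$ are independent the $W_\alpha$--$W_\alpha^H$ cross term vanishes, and because $T_\alpha$ has nondecreasing paths $\mathrm{Cov}(W_\alpha(s),W_\alpha(t))=\E[T_\alpha(s\wedge t)]=(s\wedge t)^\alpha/\Gamma(\alpha+1)$, with the fractional analogue for $\mathrm{Cov}(W_\alpha^H(s),W_\alpha^H(t))$ following from the relations in Section \ref{sec:2} and the $\alpha$--self--similarity of $T_\alpha$. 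Evaluating the iterated integrals $\frac{1}{T^2}\int_0^T\!\int_0^T(s\wedge t)^\alpha\,\d s\,\d t$ and its fractional counterpart then produces the three terms appearing in $\sigma_G^2$.

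Finally, with $\ln G(T)\sim N(\mu_G,\sigma_G^2)$, I would invoke the elementary lognormal identity $\E[(e^X-K)_+]=e^{\mu_G+\sigma_G^2/2}\phi(d_1)-K\phi(d_2)$ with $d_2=(\mu_G-\ln K)/\sigma_G$ and $d_1=d_2+\sigma_G$, multiply by $e^{-rT}$, and simplify the exponent $-rT+\mu_G+\tfrac12\sigma_G^2$ with the Gamma recursions (e.g.\ $\Gamma(\alpha+3)=(\alpha+2)(\alpha+1)\Gamma(\alpha+1)$) to reach the compact form in (\ref{eq:12}). The step I expect to be delicate is the variance: one must justify the (conditional) Gaussianity of the integrated log--price, pin down the covariance of the subordinated fractional part through the monotonicity and self--similarity of $T_\alpha$, and carry out the double integrals carefully; once $\mu_G$ and $\sigma_G^2$ are in place, the rest is routine bookkeeping with Gamma functions and the standard call identity.
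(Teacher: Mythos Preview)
Your proposal follows essentially the same route as the paper: define $L(T)=\frac1T\int_0^T\ln S_t\,\d t$, argue that $L(T)$ is Gaussian under the risk-neutral measure, compute $\mu_G=\E[L(T)]$ and $\sigma_G^2=\Var[L(T)]$ via the double covariance integral of the subordinated Brownian and fractional parts, and then apply the standard lognormal call identity $e^{-rT}\E[(e^X-K)_+]=e^{-rT+\mu_G+\sigma_G^2/2}\phi(d_1)-Ke^{-rT}\phi(d_2)$. The only minor deviation is that you phrase the covariance of $W_\alpha$ via $\E[T_\alpha(s\wedge t)]$ whereas the paper writes it in the $\tfrac12(|t|^\alpha+|\tau|^\alpha-|t-\tau|^\alpha)$ form before integrating; structurally the arguments coincide.
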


\begin{proof}
Suppose
\begin{eqnarray*}
L(T)=\frac{1}{T}\int_0^T\ln S(t)dt.
\label{eq:14}
\end{eqnarray*}
Then
\begin{eqnarray}
G(T)=e^{L(T)}.
\label{eq:15}
\end{eqnarray}
We know that $\ln S_t\simeq N(u, v)$, then it is clear that the random variable $L(T)$ has Gaussian distribution under the risk-neutral
probability measure. We will now compute its mean and variance under the risk-neutral probability measure. Let $\E$ denote
the expectation and, $\mu_G$ and $\sigma_G^2$ denote the mean and the variance of the random variable $\E$ under the risk-neutral
probability measure. Note that

\begin{eqnarray*}
\mu_G&=&\E[L(T)]=\frac{1}{T}\int_0^T\E[\ln S(t)]dt\nonumber\\
&=&\ln S(0)+\frac{1}{T}\int_0^T(r-q)\frac{t^{\alpha}}{\Gamma(\alpha+1)}dt-\frac{\sigma^2}{2T}\int_0^T\left[\frac{t^{\alpha}}{\Gamma(\alpha+1)}+\frac{t^{2\alpha H}}{(\Gamma(\alpha+1))^{2H}}\right]dt\nonumber\\
&=&\ln S(0)+(r-q)\frac{T^{\alpha}}{\Gamma(\alpha+2)}-\frac{\sigma^2T^\alpha}{2\Gamma(\alpha+2)}-\frac{\sigma^2T^{2\alpha H}}{(4\alpha H+2)(\Gamma(\alpha+1))^{2H}},
\label{eq:16}
\end{eqnarray*}
and
\begin{eqnarray*}
\sigma_G^2&=&Var[L(T)]=\E[(L(T)-\mu_G)^2]\nonumber\\
&=&\frac{\sigma^2}{T^2}\int_0^T\int_0^T\left(\E[W_\alpha(t)W_\alpha(\tau)]+\E[W_\alpha^H(t)W_\alpha^H(\tau)]\right)dtd\tau,
\label{eq:17}
\end{eqnarray*}

by independence of the processes $B(t), B^H(t)$ and $T_\alpha(t)$, we obtain

\begin{eqnarray*}
&=&\frac{\sigma^2}{T^2}\int_0^T\int_0^T\left(|\frac{t^{\alpha}}{\Gamma(\alpha+1)}|+|\frac{\tau^{\alpha}}{\Gamma(\alpha+1)}|-|\frac{(t-\tau)^{\alpha}}{\Gamma(\alpha+1)}|\right)dtd\tau\\
&+&\frac{\sigma^2}{T^2}\int_0^T\int_0^T\left(|\frac{t^{\alpha}}{\Gamma(\alpha+1)}|^{2H}+|\frac{\tau^{\alpha}}{\Gamma(\alpha+1)}|^{2H}-|\frac{(t-\tau)^{\alpha}}{\Gamma(\alpha+1)}|^{2H}\right)dtd\tau\\
&=&\frac{\sigma^2T^{\alpha}}{\Gamma(\alpha+2)}+\frac{\sigma^2(-T)^{\alpha}}{\Gamma(\alpha+3)}+\frac{\sigma^2T^{2\alpha H}}{(2\alpha H+2)(\Gamma(\alpha+1))^{2H}}.
\label{eq:18}
\end{eqnarray*}

From (\ref{eq:15}), we know that the random variable $G(T )$ is log-normally distributed, then $\ln G(T )\simeq N(\mu_G, \sigma_G^2)$. Let $I=\{x:e^x>K\}$ and $\phi(.)$
be the probability density function of a standard normal distribution, then the price of geometric Asian call option is given by the following computations

\begin{eqnarray*}
C(S(0), T)&=& e^{-rT}\E[(G(T)-K)^+]\\
&=&e^{-rT}\int_I(e^x-K)\frac{1}{\sqrt{2\pi}\sigma_G}\exp\left\{-\frac{(x-\mu_G)^2}{2\sigma_G^2}\right\}dx\\
&=&e^{-rT}\int_I(e^{\mu_G+z\sigma_G }-K)\frac{1}{\sqrt{2\pi}\sigma_G}\exp\left\{-\frac{(x-\mu_G)^2}{2\sigma_G^2}\right\}\varphi(z)dz\\
&=&e^{-rT+\mu_G+\frac{1}{2}\sigma_G^2}\int_{-d_2}^{\infty}e^{-\frac{1}{2}(z-\sigma_G)^2}dz-Ke^{-rT}\int_{-d_2}^{-\infty}\varphi(z)dz\\
&=&e^{-rT+\mu_G+\frac{1}{2}\sigma_G^2}\int_{-d_2-\sigma_G}^{\infty}\varphi(z)dz-Ke^{-rT}\int_{-\infty}^{d_2}\varphi(z)dz\\
&=&e^{-rT+\mu_G+\frac{1}{2}\sigma_G^2}\int_{-\infty}^{d_2+\sigma_G}\varphi(z)dz-Ke^{-rT}\int_{-\infty}^{d_2}\varphi(z)dz\\
&=&e^{-rT+\mu_G+\frac{1}{2}\sigma_G^2}\phi(d_1)-Ke^{-rT}\phi(d_2),\\
&=&S(0)\exp\Bigg\{-rT+(r-q)\frac{T^{\alpha}}{\Gamma(\alpha+2)}+\sigma^2\frac{(-T)^{\alpha}}{2\Gamma(\alpha+3)}\nonumber\\
&&-\sigma^2\frac{T^{2\alpha H}}{4(2\alpha H+1)(\alpha H+1)(\Gamma(\alpha+1))^{2H}}\Bigg\}\phi(d_1)-Ke^{-qT}\phi(d_2),
\label{eq:19}
\end{eqnarray*}

here

\begin{eqnarray*}
I&=&\{x:e^x>K\}=\{z:e^{\mu_G+z\sigma_G}>K\}\\
&=&\{z:\mu_G+z\sigma_G>\ln K\}=\{z:z>-d_2\},
\label{eq:20}
\end{eqnarray*}
thus we obtain the pricing formula.
\end{proof}
Moreover, using the put–call parity, the valuation model for a geometric Asian put option under time changed mixed fractional $BS$ model can be written

\begin{eqnarray}
P(S(0), T)&=&Ke^{-qT}\phi(-d_2)-S(0)\exp\Bigg\{-rT+(r-q)\frac{T^{\alpha}}{\Gamma(\alpha+2)}+\frac{\sigma^2(-T)^{\alpha}}{2\Gamma(\alpha+3)}\nonumber\\
&&-\frac{\sigma^2T^{2\alpha H}}{4(2\alpha H+1)(\alpha H+1)(\Gamma(\alpha+1))^{2H}}\Bigg\}\phi(-d_1),
\label{eq:21}
\end{eqnarray}

where $d_1$ and $d_2$ are defined previously.

Letting $\alpha\uparrow 1$, then the stock price follows the $mfBm$ shown below

\begin{eqnarray}
S_t&&=S_0\exp\Big\{(r-q)T+\sigma B(t)+\sigma B^ H(t)\nonumber\\
&&-\frac{1}{2}\sigma^2t-\frac{1}{2}\sigma^2t^{2H}\Big\},\quad 0<t<T,
\label{eq:22}
\end{eqnarray}
and the result is presented below.
\begin{cor}
The value of geometric Asian call option with maturity $T$ and strike $K$, whose stock price follows Eq. (\ref{eq:22}), is given by

\begin{eqnarray}
&&C(S(0), T)=\nonumber\\
&&S(0)\exp\Bigg\{-\frac{1}{2}(r+q)T-\frac{\sigma^2T}{12}-\frac{\sigma^2T^{2H}}{4(2H+1)(H+1)}\Bigg\}\phi(d_1)-Ke^{-qT}\phi(d_2),
\label{eq:23}
\end{eqnarray}
where
\begin{eqnarray*}
d_2&=&\frac{\mu_G-\ln K}{\sigma_G},\quad d_1=d_2+\sigma_G,\\
\mu_G&=&\ln S(0)+\frac{1}{2}(r-q-\frac{\sigma^2}{2})T-\frac{\sigma^2T^{2H}}{2(2 H+1)},\\
\sigma_G^2&=&\frac{\sigma^2T}{3}+\frac{\sigma^2T^{2H}}{(2H+2)},
\label{eq:24}
\end{eqnarray*}
which is consistent with result in \cite{rao2016pricing}.
\end{cor}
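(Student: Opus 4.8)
The plan is to obtain the Corollary directly from Theorem~\ref{thm:1} by specialising its closed-form output at $\alpha=1$. By the first Remark of Section~\ref{sec:2}, as $\alpha\uparrow 1$ we have $W_\alpha(t)\to B(t)$ and $W_\alpha^H(t)\to B^H(t)$ (equivalently $T_\alpha(t)\to t$), so the dynamics \eqref{eq:8} degenerate to the mixed fractional model \eqref{eq:22}; hence it suffices to let $\alpha\to1$ in \eqref{eq:12}--\eqref{eq:13}. This passage is legitimate because every quantity there is a continuous function of $\alpha$ in a neighbourhood of $1$: $\Gamma$ is continuous and nonvanishing on that range, and the factors $2\alpha H+1$, $\alpha H+1$, $2\alpha H+2$ are bounded away from $0$.

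First I would record the elementary evaluations $\Gamma(\alpha+1)\to\Gamma(2)=1$, $\Gamma(\alpha+2)\to\Gamma(3)=2$, $\Gamma(\alpha+3)\to\Gamma(4)=6$, together with $T^\alpha\to T$ and $T^{2\alpha H}\to T^{2H}$. Feeding these into the definitions of Theorem~\ref{thm:1} gives $\mu_G\to\ln S(0)+\tfrac12\bigl(r-q-\tfrac{\sigma^2}{2}\bigr)T-\tfrac{\sigma^2T^{2H}}{2(2H+1)}$ and $\sigma_G^2\to\tfrac{\sigma^2T}{2}-\tfrac{\sigma^2T}{6}+\tfrac{\sigma^2T^{2H}}{2H+2}=\tfrac{\sigma^2T}{3}+\tfrac{\sigma^2T^{2H}}{2H+2}$, which are exactly the expressions in \eqref{eq:24}; consequently $d_2=(\mu_G-\ln K)/\sigma_G$ and $d_1=d_2+\sigma_G$ converge to the claimed quantities. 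For the exponential prefactor in \eqref{eq:12}, the exponent tends to $-rT+(r-q)\tfrac{T}{2}-\tfrac{\sigma^2T}{12}-\tfrac{\sigma^2T^{2H}}{4(2H+1)(H+1)}$, and the single algebraic simplification needed is $-rT+(r-q)\tfrac{T}{2}=-\tfrac12(r+q)T$, which reproduces the factor appearing in \eqref{eq:23}. Collecting these limits turns \eqref{eq:12} into \eqref{eq:23}, and the comparison with \cite{rao2016pricing} is then immediate.

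There is no genuine obstacle here — the argument is a continuity/limit computation — and the only points needing care are keeping the sign straight in the term $(-T)^\alpha\to -T$ and carrying out the collapse $-rT+\tfrac12(r-q)T=-\tfrac12(r+q)T$ in the prefactor exponent. If a self-contained derivation is preferred to the limiting argument, one can instead rerun the proof of Theorem~\ref{thm:1} verbatim starting from \eqref{eq:22}: $L(T)=\tfrac1T\int_0^T\ln S(t)\,\d t$ is again Gaussian, with $\E[\ln S(t)]=\ln S(0)+(r-q)t-\tfrac12\sigma^2 t-\tfrac12\sigma^2 t^{2H}$ and covariance $\E[B(t)B(\tau)]+\E[B^H(t)B^H(\tau)]=\min(t,\tau)+\tfrac12\bigl(t^{2H}+\tau^{2H}-|t-\tau|^{2H}\bigr)$; the double integrals over $[0,T]^2$ produce $\mu_G$ and $\sigma_G^2$ as in \eqref{eq:24}, and the same log-normal integration as in the proof of Theorem~\ref{thm:1} then yields \eqref{eq:23}.
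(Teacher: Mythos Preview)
Your proposal is correct and follows exactly the paper's approach: the Corollary is obtained simply by letting $\alpha\uparrow 1$ in Theorem~\ref{thm:1}, and you carry out this specialisation carefully (including the sign in $(-T)^\alpha\to -T$ and the collapse $-rT+\tfrac12(r-q)T=-\tfrac12(r+q)T$). The paper itself provides no further argument beyond the sentence ``Letting $\alpha\uparrow 1$\ldots'', so your write-up is in fact more detailed than the original.
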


\section{Pricing model of Asian power option}\label{sec:5}

In this section, we consider the pricing model of Asian power call option with strike price $K$ and maturity time $T$ under time changed mixed fractional $BS$ model where the payoff function is $(G^n(T)-K)^+$ for some constant integer $n\geq 1$.

\begin{thm}

Suppose the stock price $S_t$ satisfied Eq. (\ref{eq:8}). Then, under the risk-neutral probability measure the value of geometric Asian power call option $C(S(0), T)$ with strike price $K$, maturity time $T$ and payoff function $(G^n(T)-K)^+$ is given by

\begin{eqnarray}
C(S(0), T)&=&S(0)\exp\Bigg\{-rT+(r-q)\frac{nT^{\alpha}}{\Gamma(\alpha+2)}-\frac{(n-n^2)\sigma^2T^\alpha}{2\Gamma(\alpha+2)}+\frac{n^2\sigma^2(-T)^{\alpha}}{2\Gamma(\alpha+3)}\nonumber\\
&&-\frac{n\sigma^2T^{2\alpha H}}{(4\alpha H+2)(\Gamma(\alpha+1))^{2H}}-\frac{n^2\sigma^2T^{2\alpha H}}{(4\alpha H+4)(\Gamma(\alpha+1))^{2H}}\Bigg\}\phi(f_1)\nonumber\\
&-&Ke^{-qT}\phi(f_2),
\label{eq:25}
\end{eqnarray}
where
\begin{eqnarray*}
f_2&=&\frac{\mu_G-\frac{1}{n}\ln K}{\sigma_G},\quad f_1=f_2+n\sigma_G,\\
\mu_G&=&\ln S(0)+(r-q-\frac{\sigma^2}{2})\frac{T^{\alpha}}{\Gamma(\alpha+2)}-\frac{\sigma^2T^{2\alpha H}}{2(2\alpha H+1)(\Gamma(\alpha+1))^{2H}},\\
\sigma_G^2&=&\frac{\sigma^2T^{\alpha}}{\Gamma(\alpha+2)}+\frac{\sigma^2(-T)^{\alpha}}{\Gamma(\alpha+3)}+\frac{\sigma^2T^{2\alpha H}}{(2\alpha H+2)(\Gamma(\alpha+1))^{2H}},
\label{eq:26}
\end{eqnarray*}
the interest rate $r$ and the dividend rate $q$ are constant over time and $\varphi(.)$ denotes cumulative normal density function.
\label{thm:2}
\end{thm}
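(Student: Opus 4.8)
The plan is to mimic the proof of Theorem~\ref{thm:1}, replacing $G(T)$ by $G^n(T)$ throughout. Set $L(T)=\frac1T\int_0^T\ln S(t)\,\d t$ as before, so that $G^n(T)=e^{nL(T)}$. Since $\ln S_t$ is Gaussian and $L(T)$ is a linear (integral) functional of the Gaussian process $\ln S(\cdot)$, the variable $nL(T)$ is again Gaussian under the risk–neutral measure; write $nL(T)\simeq N(n\mu_G,\,n^2\sigma_G^2)$, where $\mu_G$ and $\sigma_G^2$ are exactly the mean and variance computed in Theorem~\ref{thm:1}. Thus no new moment computation is needed: I would simply quote
\begin{eqnarray*}
\mu_G&=&\ln S(0)+(r-q-\tfrac{\sigma^2}{2})\frac{T^{\alpha}}{\Gamma(\alpha+2)}-\frac{\sigma^2T^{2\alpha H}}{2(2\alpha H+1)(\Gamma(\alpha+1))^{2H}},\\
\sigma_G^2&=&\frac{\sigma^2T^{\alpha}}{\Gamma(\alpha+2)}+\frac{\sigma^2(-T)^{\alpha}}{\Gamma(\alpha+3)}+\frac{\sigma^2T^{2\alpha H}}{(2\alpha H+2)(\Gamma(\alpha+1))^{2H}}
\end{eqnarray*}
from the previous theorem.

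Next I would evaluate $C(S(0),T)=e^{-rT}\,\E\big[(G^n(T)-K)^+\big]=e^{-rT}\,\E\big[(e^{nL(T)}-K)^+\big]$ by the standard log–normal integral. Writing $nL(T)=n\mu_G+n\sigma_G z$ with $z\simeq N(0,1)$, the exercise region is $\{z:\,n\mu_G+n\sigma_G z>\ln K\}=\{z>-f_2\}$ with $f_2=(\mu_G-\frac1n\ln K)/\sigma_G$; completing the square in the exponent $n\mu_G+n\sigma_G z-\tfrac12 z^2$ produces the shift $z\mapsto z-n\sigma_G$ and the constant $n\mu_G+\tfrac12 n^2\sigma_G^2$, giving the two–term formula
\begin{eqnarray*}
C(S(0),T)=e^{-rT+n\mu_G+\frac12 n^2\sigma_G^2}\phi(f_1)-Ke^{-rT}\phi(f_2),\qquad f_1=f_2+n\sigma_G.
\end{eqnarray*}
This is the same computation as in Theorem~\ref{thm:1} with $\mu_G\to n\mu_G$, $\sigma_G\to n\sigma_G$.

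The only real work left is the algebraic identification of the prefactor: I must show $e^{-rT+n\mu_G+\frac12 n^2\sigma_G^2}$ equals $S(0)$ times the exponential displayed in \eqref{eq:25}. Substituting the expressions above, the $\ln S(0)$ term yields the factor $S(0)$; the term $n(r-q)\frac{T^\alpha}{\Gamma(\alpha+2)}$ matches directly; the $-\tfrac12$ coefficient in $n\mu_G$ gives $-\frac{n\sigma^2T^\alpha}{2\Gamma(\alpha+2)}$, which combines with the $\frac12 n^2\sigma_G^2$ contribution $\frac{n^2\sigma^2T^\alpha}{2\Gamma(\alpha+2)}$ to produce $-\frac{(n-n^2)\sigma^2T^\alpha}{2\Gamma(\alpha+2)}$; the $\frac{(-T)^\alpha}{\Gamma(\alpha+3)}$ piece of $\sigma_G^2$ contributes $\frac{n^2\sigma^2(-T)^\alpha}{2\Gamma(\alpha+3)}$; and the two $T^{2\alpha H}$ terms reorganize, using $2(4\alpha H+2)=8\alpha H+4$ etc., into $-\frac{n\sigma^2T^{2\alpha H}}{(4\alpha H+2)(\Gamma(\alpha+1))^{2H}}-\frac{n^2\sigma^2T^{2\alpha H}}{(4\alpha H+4)(\Gamma(\alpha+1))^{2H}}$, matching \eqref{eq:25}. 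I expect this bookkeeping of the $T^{2\alpha H}$ coefficients (reconciling the factor $\frac12 n^2$ against the denominators $2\alpha H+1$ versus $4\alpha H+2$ and $2\alpha H+2$ versus $4\alpha H+4$) to be the one place where sign and factor errors are easy to make, so that is the step I would write out most carefully. Finally, I would note that setting $n=1$ recovers Theorem~\ref{thm:1}, as a consistency check, and state that $\phi(\cdot)$ again denotes the standard normal cdf; the boundary‐region computation $I=\{z:z>-f_2\}$ is identical in form to that in the proof of Theorem~\ref{thm:1}.
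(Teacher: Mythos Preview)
Your proposal is correct and follows essentially the same route as the paper's own proof: write the payoff as $(e^{nL(T)}-K)^+$, reuse $\mu_G$ and $\sigma_G^2$ from Theorem~\ref{thm:1}, carry out the standard log--normal integral via the substitution $z=(x-\mu_G)/\sigma_G$ and completion of the square to obtain $e^{-rT+n\mu_G+\frac12 n^2\sigma_G^2}\phi(f_1)-Ke^{-rT}\phi(f_2)$ with the exercise region $I=\{z:z>-f_2\}$, and then substitute. Your explicit algebraic bookkeeping of the prefactor is actually more detailed than what the paper provides, which simply records the expanded final form.
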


\begin{proof}
The payoff function for Asian power option is $(G^n(T)-K)^+=(e^{nL(T)}-K)^+$, then applying similar computation in Theorem \ref{thm:1}, we obtain

\begin{eqnarray*}
C(S(0), T)&=& e^{-rT}\E[(G^n(T)-K)^+]\\
&=&e^{-rT}\int_I(e^{nx}-K)\frac{1}{\sqrt{2\pi}\sigma_G}\exp\left\{-\frac{(x-\mu_G)^2}{2\sigma_G^2}\right\}dx\\
&=&e^{-rT}\int_I(e^{n(\mu_G+z\sigma_G) }-K)\frac{1}{\sqrt{2\pi}\sigma_G}\exp\left\{-\frac{(x-\mu_G)^2}{2\sigma_G^2}\right\}\varphi(z)dz\\
&=&e^{-rT+n\mu_G+\frac{1}{2}n^2\sigma_G^2}\int_{-f_2}^{\infty}e^{-\frac{1}{2}(z-n\sigma_G)^2}dz-Ke^{-rT}\int_{-f_2}^{-\infty}\varphi(z)dz\\
&=&e^{-rT+n\mu_G+\frac{1}{2}n^2\sigma_G^2}\int_{-f_2-n\sigma_G}^{\infty}\varphi(z)dz-Ke^{-rT}\int_{-\infty}^{f_2}\varphi(z)dz\\
&=&e^{-rT+n\mu_G+\frac{1}{2}n^2\sigma_G^2}\int_{-\infty}^{f_2+n\sigma_G}\varphi(z)dz-Ke^{-rT}\int_{-\infty}^{f_2}\varphi(z)dz\\
&=&e^{-rT+n\mu_G+\frac{1}{2}n^2\sigma_G^2}\phi(f_1)-Ke^{-rT}\phi(f_2),\\
&=&S(0)\exp\Bigg\{-rT+(r-q)\frac{nT^{\alpha}}{\Gamma(\alpha+2)}-\frac{(n-n^2)\sigma^2T^\alpha}{2\Gamma(\alpha+2)}+\frac{n^2\sigma^2(-T)^{\alpha}}{2\Gamma(\alpha+3)}\nonumber\\
&&-\frac{n\sigma^2T^{2\alpha H}}{(4\alpha H+2)(\Gamma(\alpha+1))^{2H}}-\frac{n^2\sigma^2T^{2\alpha H}}{(4\alpha H+4)(\Gamma(\alpha+1))^{2H}}\Bigg\}\phi(f_1)\nonumber\\
&-&Ke^{-qT}\phi(f_2),
\label{eq:27}
\end{eqnarray*}

here

\begin{eqnarray*}
I&=&\{x:e^{nx}>K\}=\{z:e^{n(\mu_G+z\sigma_G)}>K\}\\
&=&\{z:\mu_G+z\sigma_G>\frac{1}{n}\ln K\}=\{z:z>-f_2\},
\label{eq:28}
\end{eqnarray*}
thus the proof is completed.
\end{proof}

\section{Lower bound of the Asian option price}\label{sec:6}

The aim of this section is to obtain the lower bound on the price of the Asian option. The next theorem shows that the normal distribution is stable when the random variables are jointly normal.

\begin{thm} (\cite{hoffman1994probability}) The conditional distribution of $\ln S_{t_i}$ given $\ln G(T)$ is a normal distribution
\begin{eqnarray*}
(\ln S_{t_i}|\ln G(T)=z)\simeq N(\mu_i+(z-\mu_G)\frac{\lambda_i}{\sigma_G^2}, \sigma_i^2-\frac{\lambda_i^2}{\sigma_G^2}),\quad i=1,...,n,
\label{eq:29}
\end{eqnarray*}
where
\begin{eqnarray*}
\mu_i&=&\ln S(0)+(r-q)T_\alpha(t_i)-\frac{1}{2}\sigma^2\frac{t_i^\alpha}{\Gamma(\alpha+1)}-\frac{1}{2}\sigma^2\left(\frac{t_i^\alpha}{\Gamma(\alpha+1)}\right)^{2H}\\
\sigma_i^2&=&\sigma^2\frac{t_i^\alpha}{\Gamma(\alpha+1)}+\sigma^2\left(\frac{t_i^\alpha}{\Gamma(\alpha+1)}\right)^{2H},
\label{eq:30}
\end{eqnarray*}
$\lambda_i=Cov(\ln S_{t_i}, \ln G(T))$, $0\leq t_1<t_2<...<t_n\leq T$, $T_\alpha(t)$ is inverse $\alpha$-stable subordinator and, $\mu_G$ and $\sigma_G^2$ are defined in Theorem \ref{thm:1}.

Moreover, $(S_{t_i}|\ln G(T))$ has a lognormal distribution and

\begin{eqnarray}
&&\E\left[S_{t_i}|\ln G(T)=z\right]\nonumber\\
&&=\exp\left\{\mu_i+(z-\mu_G)\frac{\lambda_i}{\sigma_G^2}+\frac{1}{2} (\sigma_i^2-\frac{\lambda_i^2}{\sigma_G^2})\right\}\quad i=1,...,n.
\label{eq:31}
\end{eqnarray}

\label{thm:3}
\end{thm}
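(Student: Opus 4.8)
The final statement is the classical Gaussian conditioning lemma specialised to the pair $(\ln S_{t_i},\ln G(T))$, and I would organise the proof into three steps: (a) check that this pair (indeed the whole vector $(\ln S_{t_1},\dots,\ln S_{t_n},\ln G(T))$) is jointly normal; (b) quote the conditioning formula of \cite{hoffman1994probability} and identify its parameters with the quantities in the statement; (c) pass from the conditional normal law of $\ln S_{t_i}$ to the conditional lognormal law of $S_{t_i}$ and read off Eq.~(\ref{eq:31}).

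First I would record that, under assumption (i) and conditionally on the $\sigma$-algebra generated by the subordinator $T_\alpha$, the process $\{\ln S_t\}_{0\le t\le T}$ is Gaussian: by Eq.~(\ref{eq:8}) it is an affine function of $\sigma W_\alpha(t)+\sigma W_\alpha^H(t)$, and $t\mapsto(W_\alpha(t),W_\alpha^H(t))$ is a Gaussian process once $T_\alpha$ is frozen. Since $\ln G(T)=L(T)=\frac1T\int_0^T\ln S(t)\,\d t$ is a mean-square limit of Riemann sums $\frac1T\sum_k\ln S(s_k)\,\Delta s_k$, each a linear combination of the $\ln S(s_k)$, the variable $\ln G(T)$ is Gaussian and jointly Gaussian with every $\ln S_{t_i}$. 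Along the way one verifies that $\mu_i=\E[\ln S_{t_i}]$ and $\sigma_i^2=\Var[\ln S_{t_i}]$ are obtained by the same computation as the mean and variance of $\ln S_t$ in Section~\ref{sec:4}, evaluated at $t=t_i$, while $\mu_G$ and $\sigma_G^2$ are exactly the quantities already produced in Theorem~\ref{thm:1}, and $\lambda_i=Cov(\ln S_{t_i},\ln G(T))=\frac1T\int_0^T Cov(\ln S_{t_i},\ln S_t)\,\d t$.

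Next I would invoke the bivariate normal conditioning result: if $(X,Y)$ is jointly normal then $(X\mid Y=z)\simeq N\!\big(\mu_X+(z-\mu_Y)\tfrac{Cov(X,Y)}{\Var Y},\ \Var X-\tfrac{Cov(X,Y)^2}{\Var Y}\big)$. Taking $X=\ln S_{t_i}$, $Y=\ln G(T)$ and substituting the parameters identified above gives the first displayed assertion; the non-negativity of the conditional variance $\sigma_i^2-\lambda_i^2/\sigma_G^2$ is just Cauchy--Schwarz for $Cov$. For the second assertion, since conditionally $\ln S_{t_i}\mid\{\ln G(T)=z\}\simeq N(m_i,s_i^2)$ with $m_i=\mu_i+(z-\mu_G)\lambda_i/\sigma_G^2$ and $s_i^2=\sigma_i^2-\lambda_i^2/\sigma_G^2$, the conditional law of $S_{t_i}=e^{\ln S_{t_i}}$ is lognormal, and the elementary identity $\E[e^{W}]=\exp(m+\tfrac12 s^2)$ for $W\simeq N(m,s^2)$ yields Eq.~(\ref{eq:31}).

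The only genuinely delicate point is step (a): because the time change $T_\alpha$ is random, $\ln S_t$ is Gaussian only after conditioning on $T_\alpha$, so all of the statements above must be read conditionally on the subordinator (equivalently, on a fixed realisation of $T_\alpha$) — which is the convention already in force in Sections~\ref{sec:4}--\ref{sec:5} where $T_\alpha(t)$ appears inside the deterministic-looking coefficients. Once that convention is granted, joint normality is immediate from linearity and everything else is a direct substitution into the textbook lemma, so I do not expect any real computational difficulty beyond bookkeeping of the moments $\mu_i,\sigma_i^2,\mu_G,\sigma_G^2,\lambda_i$.
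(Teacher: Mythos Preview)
Your proposal is correct, but note that the paper does not actually prove this theorem: it is stated with a bare citation to \cite{hoffman1994probability} and used as a black box in the lower-bound computation that follows. So there is no ``paper's own proof'' to compare against --- the result is simply quoted as the standard bivariate normal conditioning lemma.

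What you have written is precisely the textbook argument that the citation stands in for, together with the identification of the moments $\mu_i,\sigma_i^2,\mu_G,\sigma_G^2,\lambda_i$ in the present model. Your observation that joint normality of $(\ln S_{t_i},\ln G(T))$ only holds conditionally on the subordinator $T_\alpha$ is the one point the paper glosses over, and you are right that this is the implicit convention throughout Sections~\ref{sec:4}--\ref{sec:6} (the random quantity $T_\alpha(t_i)$ even appears explicitly inside the ``mean'' $\mu_i$). With that reading in place, your three steps --- joint Gaussianity via linearity and mean-square limits, the bivariate conditioning formula, and the lognormal moment identity --- are exactly what is needed, and nothing further is required.
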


Now, we condition on the geometric average $G(T)$ in the pricing expresion of the Asian option

\begin{eqnarray*}
C(S(0), T)&=&e^{-rT}\E[(A(T)-K)^+]=e^{-rT}\E[\E[(A(T)-K)^+|G(T)]\\
&=&e^{-rT}\int_0^\infty\E[(A(T)-K)^+|G(T)=z]g(z)dz,
\label{eq:32}
\end{eqnarray*}
 where $g$ is the lognormal density function of $G$. Let

\begin{eqnarray*}
C_1&=&\int_0^K\E[(A(T)-K)^+|G(T)=z]g(z)dz,\\
C_2&=&\int_K^\infty\E[(A(T)-K)^+|G(T)=z]g(z)dz,
\label{eq:33}
\end{eqnarray*}
then $C(S(0), T)=e^{-rT}(C_1+C_2)$.
Sine the geometric average is less than arithmetic average $A(T)\geq G(T)$,
\begin{eqnarray}
C_2&=&\int_K^\infty\E[A(T)-K|G(T)=z]g(z)dz,
\label{eq:34}
\end{eqnarray}

from Theorem \ref{thm:3}, we can calculate $C_2$. Applying Jensen's inequality we obtain a lower bound on $C_1$

\begin{eqnarray}
C_1&=&\int_0^K\E[(A(T)-K)^+|G(T)=z]g(z)dz\nonumber\\
&\geq &\int_0^K\left(E[A(T)-K|G(T)=z]\right)^+g(z)dz\nonumber\\
&=&\int_{\tilde{K}}^K\E[A(T)-K|G(T)=z]g(z)dz=\tilde{C}_1.
\label{eq:35}
\end{eqnarray}
where $\tilde{K}=\left\{z|\E[A(T)|G(T)=z]=K\right\}$.

Eq. (\ref{eq:31}) enables us to obtain $\tilde{K}$, then we calculate the following expectation

\begin{eqnarray*}
\E[A(T)|G(T)=z]&=&\E\left[\frac{1}{n}\sum_{i=1}^nS_{t_i}|G(T)=z\right]=\frac{1}{n}\sum_{i=1}^n\E\left[S_{t_i}|G(T)=z\right]\\
&&=\frac{1}{n}\sum_{i=1}^n\exp\left(\mu_i+(\log z-\mu_G)\frac{\lambda_i}{\sigma_G^2}+\frac{1}{2} (\sigma_i^2-\frac{\lambda_i^2}{\sigma_G^2})\right).
\label{eq:36}
\end{eqnarray*}

\begin{thm}

A lower bound on the price of the Asian option with strike price $K$ and maturity time  $T$ is given by
\begin{eqnarray*}
\tilde{C}(S(0), T)&=&e^{-rT}(\tilde{C}_1+C_2)\\
&=&e^{-rT}\Big\{\frac{1}{n}\sum_{i=1}^n\exp(\mu_i+\frac{1}{2}\sigma_i^2)\phi\left(\frac{\mu_G-\ln \tilde{K}+\gamma_i}{\sigma_G}\right)\\
&&-K\phi\left(\frac{\mu_G-\ln \tilde{K}}{\sigma_G}\right)\Big\},
\label{eq:37}
\end{eqnarray*}
where all parameters are defined previously.
\label{thm:4}
\end{thm}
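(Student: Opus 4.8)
The plan is to convert the decomposition $C(S(0),T)=e^{-rT}(C_1+C_2)$ recorded above into the closed form claimed for $\tilde C(S(0),T)=e^{-rT}(\tilde C_1+C_2)$. First I would dispose of $C_2$: on the event $\{G(T)=z\}$ with $z>K$, the pathwise inequality $A(T)\geq G(T)$ forces $A(T)\geq z>K$ almost surely, so $(A(T)-K)^+=A(T)-K$ there and $C_2=\int_K^\infty\E[A(T)-K\,|\,G(T)=z]\,g(z)\,dz$ with no loss. For $C_1$ I would apply the conditional Jensen inequality to the convex function $x\mapsto(x-K)^+$, giving $\E[(A(T)-K)^+\,|\,G(T)=z]\geq(\E[A(T)-K\,|\,G(T)=z])^+$ and hence the lower bound $\tilde C_1$ after integrating against $g$ over $[0,K]$. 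The key structural fact is that $z\mapsto\E[A(T)\,|\,G(T)=z]$ is strictly increasing: by (\ref{eq:31}) it is a positive combination of the power functions $z\mapsto z^{\lambda_i/\sigma_G^2}$, and each covariance $\lambda_i=Cov(\ln S_{t_i},\ln G(T))$ is positive. Hence $\{z:\E[A(T)\,|\,G(T)=z]\geq K\}=[\tilde K,\infty)$, and since $\E[A(T)\,|\,G(T)=K]\geq\E[G(T)\,|\,G(T)=K]=K$ we get $\tilde K\leq K$; therefore on $[0,K]$ the function $(\E[A(T)-K\,|\,G(T)=z])^+$ coincides with $\E[A(T)-K\,|\,G(T)=z]$ for $z\in[\tilde K,K]$ and vanishes for $z\in[0,\tilde K)$, so $\tilde C_1=\int_{\tilde K}^K\E[A(T)-K\,|\,G(T)=z]\,g(z)\,dz$, and adding $C_2$ collapses the two pieces into
\begin{eqnarray*}
\tilde C(S(0),T)=e^{-rT}\int_{\tilde K}^\infty\E[A(T)-K\,|\,G(T)=z]\,g(z)\,dz .
\end{eqnarray*}

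The second step is the explicit evaluation of this one-dimensional integral. I would use the discrete representation $A(T)\approx\frac1n\sum_{i=1}^nS_{t_i}$, substitute the conditional lognormal mean $\E[S_{t_i}\,|\,G(T)=z]=\exp\bigl(\mu_i+(\ln z-\mu_G)\lambda_i/\sigma_G^2+\tfrac12(\sigma_i^2-\lambda_i^2/\sigma_G^2)\bigr)$ from Theorem \ref{thm:3}, and use that $\ln G(T)\simeq N(\mu_G,\sigma_G^2)$, so that the change of variables $y=(\ln z-\mu_G)/\sigma_G$ turns $g(z)\,dz$ into $\varphi(y)\,dy$ and the lower limit $\tilde K$ into $a=(\ln\tilde K-\mu_G)/\sigma_G$. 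The $-K$ term then contributes $K\int_a^\infty\varphi(y)\,dy=K\,\phi\bigl((\mu_G-\ln\tilde K)/\sigma_G\bigr)$, while for each $i$ the remaining integral, after collecting the term linear in $y$ in the exponent, reduces to the standard Gaussian identity $\int_a^\infty e^{cy}\varphi(y)\,dy=e^{c^2/2}\phi(c-a)$ with $c=\lambda_i/\sigma_G$; this produces $\exp(\mu_i+\tfrac12\sigma_i^2)\,\phi\bigl((\mu_G-\ln\tilde K+\gamma_i)/\sigma_G\bigr)$, where $\gamma_i:=\lambda_i$ is the shift appearing in the statement. Summing over $i$ with weight $1/n$ and restoring the factor $e^{-rT}$ yields exactly the asserted formula; along the way $\tilde K$ is determined by the equation $\frac1n\sum_{i=1}^n\exp\bigl(\mu_i+(\ln\tilde K-\mu_G)\lambda_i/\sigma_G^2+\tfrac12(\sigma_i^2-\lambda_i^2/\sigma_G^2)\bigr)=K$, which is what $\tilde K=\{z\,|\,\E[A(T)\,|\,G(T)=z]=K\}$ means.

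I expect the only genuinely delicate point to be the verification that the threshold $\tilde K$ is well defined, i.e. the monotonicity of $z\mapsto\E[A(T)\,|\,G(T)=z]$ (resting on the positivity of the covariances $\lambda_i$) together with the comparison $\E[A(T)\,|\,G(T)=K]\geq K$ that guarantees $\tilde K\leq K$; once these are in place, the conditional Jensen step is immediate and everything else is the routine Gaussian bookkeeping above. It is worth recording that $\tilde C$ is a genuine \emph{lower} bound and not an equality: Jensen's inequality is strict on $[\tilde K,K)$ whenever $A(T)$ is not conditionally degenerate given $G(T)$, and on $[0,\tilde K)$ the true integrand $\E[(A(T)-K)^+\,|\,G(T)=z]$ need not vanish while its convex lower bound does.
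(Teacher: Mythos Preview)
Your proposal is correct and follows essentially the same route as the paper: combine $\tilde C_1$ and $C_2$ into the single integral $\int_{\tilde K}^\infty\E[A(T)-K\mid G(T)=z]\,g(z)\,dz$, insert the conditional lognormal mean from Theorem~\ref{thm:3}, and reduce each term to a standard normal tail by a change of variables (the paper uses $y=(\mu_G-\ln z+\lambda_i)/\sigma_G$, you use $y=(\ln z-\mu_G)/\sigma_G$ together with the identity $\int_a^\infty e^{cy}\varphi(y)\,dy=e^{c^2/2}\phi(c-a)$, which is equivalent). Your identification $\gamma_i=\lambda_i$ matches what the paper's computation produces, and your extra care in checking that $\tilde K$ is well defined and satisfies $\tilde K\le K$ (via the positivity of the $\lambda_i$ and the comparison $\E[A(T)\mid G(T)=K]\ge K$) fills in a point the paper leaves implicit.
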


\begin{proof}
Collecting Eqs. (\ref{eq:34}) and (\ref{eq:35}) gives

\begin{eqnarray*}
\tilde{C}_1+C_2&=&\int_{\tilde{K}}^\infty\E[A(T)-K|G(T)=z]g(z)dz\\
&=&\int_{\tilde{K}}^\infty\E[A(T)|G(T)=z]g(z)dz-K\int_{\tilde{K}}^\infty g(z)dz\\
&=&\int_{\tilde{K}}^\infty\E\left[\frac{1}{n}\sum_{i=1}^nS_{t_i}|G(T)=z\right]g(z)dz-K\int_{\tilde{K}}^\infty g(z)dz\\
&=&\int_{\tilde{K}}^\infty\frac{1}{n}\sum_{i=1}^n\E\left[S_{t_i}|G(T)=z\right]g(z)dz-K\int_{\tilde{K}}^\infty g(z)dz\\
&=&\frac{1}{n}\sum_{i=1}^n\int_{\tilde{K}}^\infty\E\left[S_{t_i}|\ln G(T)=\ln z\right]g(z)dz-K\int_{\tilde{K}}^\infty g(z)dz.
\label{eq:38}
\end{eqnarray*}

From the proof of Theorem \ref{thm:1}, we obtain

\begin{eqnarray*}
K\int_{\tilde{K}}^\infty g(z)dz=K\phi\left(\frac{\mu_G-\ln \tilde{K}}{\sigma_G}\right),
\label{eq:39}
\end{eqnarray*}

and from Eq. (\ref{eq:31})
\begin{eqnarray*}
&&\int_{\tilde{K}}^\infty\E\left[S_{t_i}|\ln G(T)=\ln z\right]g(z)dz\\
&&=\int_{\tilde{K}}^\infty\exp\left(\mu_i+(\ln z-\mu_G)\frac{\lambda_i}{\sigma_G^2}+\frac{1}{2} (\sigma_i^2-\frac{\lambda_i^2}{\sigma_G^2})\right)g(z)dz\\
&&=\exp\left(\mu_i+\frac{1}{2} \sigma_i^2\right)\int_{\tilde{K}}^\infty\exp\left((\ln z-\mu_G)\frac{\lambda_i}{\sigma_G^2}-\frac{1}{2}\frac{\lambda_i^2}{\sigma_G^2}\right)g(z)dz.
\label{eq:40}
\end{eqnarray*}

Using the density of the lognormal distribution, we have

\begin{eqnarray*}
\int_{\tilde{K}}^\infty\frac{1}{\sqrt{2\pi}\sigma_G z}\exp\left((\ln z-\mu_G)\frac{\lambda_i}{\sigma_G^2}-\frac{1}{2}\frac{\lambda_i^2}{\sigma_G^2}-\frac{1}{2}(\frac{\mu_G-\ln z}{\sigma_G})^2\right)dz.
\label{eq:41}
\end{eqnarray*}

Making the change of variables $y=\frac{\mu_G-\ln z +\lambda_i}{\sigma_G}$ and $\frac{dy}{dz}=-\frac{1}{\sigma_G z}$, then we have

\begin{eqnarray*}
&&\int_{\frac{\mu_G-\ln z +\lambda_i}{\sigma_G}}^{-\infty}-\frac{1}{\sqrt{2\pi}}\exp\left((\frac{\lambda_i}{\sigma_G}-y)\frac{\lambda_i}{\sigma_G}-\frac{1}{2}\frac{\lambda_i^2}{\sigma_G^2}-\frac{1}{2}(y-\frac{\lambda_i}{\sigma_G})^2 \right)dy\\
&&=\int_{-\infty}^{\frac{\mu_G-\ln z +\lambda_i}{\sigma_G}}\frac{1}{\sqrt{2\pi}}\exp\left(-y\frac{\lambda_i}{\sigma_G}
+\frac{1}{2}\frac{\lambda_i^2}{\sigma_G^2}-\frac{1}{2}y^2-\frac{1}{2}\frac{\lambda_i^2}{\sigma_G^2}+y\frac{\lambda_i}{\sigma_G} \right)dy\\
&&=\int_{-\infty}^{\frac{\mu_G-\ln z +\lambda_i}{\sigma_G}}\frac{1}{\sqrt{2\pi}}\exp\left(-\frac{1}{2}y^2\right)dy=\phi\left(\frac{\mu_G-\ln \tilde{K}+\gamma_i}{\sigma_G}\right),
\label{eq:41}
\end{eqnarray*}

by collecting $\tilde{C}_1$ and $C_2$ the proof is completed.

\end{proof}

\bibliography{reference}
\bibliographystyle{ieeetr}

\end{document}